\newtheoremstyle{custom}% name
  {3pt}%      Space above
  {3pt}%      Space below
  {\slshape}%         Body font
  {}%         Indent amount (empty = no indent, \parindent = para indent)
  {\bfseries}% Thm head font
  {.}%        Punctuation after thm head
  { }%     Space after thm head: " " = normal interword space;
   {}%         Thm head spec (can be left empty, meaning `normal')
\theoremstyle{custom}
\newtheorem{theorem}{Theorem}[subsection]
\newtheorem{proposition}[theorem]{Proposition}
\newtheorem{proposition/definition}[theorem]{Proposition/Definition}
\newtheorem{corollary}[theorem]{Corollary}
\theoremstyle{definition}
\theoremstyle{remark}
\newtheorem{remark}[theorem]{Remark}
\newtheoremstyle{exercise}% name
  {3pt}%      Space above
  {6pt}%      Space below
  {}%         Body font
  {}%         Indent amount (empty = no indent, \parindent = para indent)
  {\bfseries}% Thm head font
  {:}%        Punctuation after thm head
  { }%     Space after thm head: " " = normal interword space;
   {}%         Thm head spec (can be left empty, meaning `normal')
\theoremstyle{exercise}
\newtheorem{exercise}[theorem]{Exercise}
\newtheoremstyle{exercises}% name
  {3pt}%      Space above
  {6pt}%      Space below
  {}%         Body font
  {}%         Indent amount (empty = no indent, \parindent = para indent)
  {\bfseries}% Thm head font
  {:}%        Punctuation after thm head
  {\newline}%     Space after thm head: " " = normal interword space;
   {}%         Thm head spec (can be left empty, meaning `normal')
\theoremstyle{exercise}
\newtheorem{exercises}[theorem]{Exercises}
\def\boxit#1{\vbox{\hrule height1pt\hbox{\vrule width1pt\kern3pt
  \vbox{\kern3pt#1\kern3pt}\kern3pt\vrule width1pt}\hrule height1pt}}
\def\BC{\mathbb C}
\def\BP{\mathbb P}
\def\pp#1{\mathbb P^{#1}}
\def\fgl{\mathfrak g\mathfrak l}
\def\pp#1{{\mathbb P}^{#1}}
\def\tdim{{\rm dim}}
\def\hd{,...,}
\def\ww{\wedge}
\def\inv{{}^{-1}}
\def\11{\mathbf 1}
\def\fsl{{\mathfrak {sl}}}
\def\l{\lambda}
\def\a{\alpha}
\def\b{\beta}
\def\s{\sigma}
\def\ot{{\mathord{ \otimes } }}
\def\ra{{\mathord{\;\rightarrow\;}}}
\def\La#1{\Lambda^{#1}}
\def\frak{\mathfrak}
\def\fgl{\frak g\frak l}\def\fsl{\frak s\frak l}
\def\s{\sigma}
\def\a{\alpha}
\def\b{\beta}
\def\l{\lambda}
\def\BP{\mathbb  P}
\def\BC{\mathbb  C}
\def\pp#1{\mathbb  P^{#1}}
\def\hd{, \hdots ,}
\def\inv{{}^{-1}}
\def\La#1{\Lambda^{#1}}
\def\pp#1{\mathbb  P^{#1}}
\def\ur{\underline {\bold R}}
\def\ra{\rightarrow}
\def\tdet{\operatorname{det}}
\def\tdim{\operatorname{dim}}
\def\tmod{\operatorname{mod}}
\def\tmin{\operatorname{min}}
\def\ww{\wedge}
\def\bbb{{\bold{b}}}
\def\be{\begin{equation}}
\def\ene{\end{equation}}
\def\aaa{{\bold {a}}}
\def\bbb{{\bold {b}}}
\def\ccc{{\bold {c}}}
\def\tlog{{\rm{log}}}
\numberwithin{equation}{section}
\numberwithin{theorem}{section}
\def\ll#1#2{\lambda_{#1,#2}}
\begin{document}

\title[Nontriviality of equations for border rank]{Nontriviality of equations and explicit tensors in $\BC^m\ot \BC^m\ot \BC^m$ of border rank at least $2m-1$}
\author{J.M. Landsberg}
%\date{April 2010}
 \begin{abstract} For odd $m$, I write down   tensors in $\BC^m\ot \BC^m\ot \BC^m$ of border rank at least  $2m-1$, showing the
non-triviality of the Young-flattening equations of \cite{LOsecbnd} that vanish on the matrix multiplication tensor. I also study the border rank of the tensors
of \cite{alexeev+forbes+tsimerman:2011:tensor-rank} and \cite{MR88g:15021}. I   show  the tensors $T_{2^k}\in \BC^k\ot \BC^{2^k}\ot \BC^{2^k}$,
of \cite{alexeev+forbes+tsimerman:2011:tensor-rank}, despite having
rank equal to $2^{k+1}-1$, have border rank equal to $2^k$. I show the equations for border rank of \cite{MR88g:15021} on
$\BC^m\ot \BC^m\ot \BC^m$
  are trivial in the case of border rank $2m-1$ and determine their precise non-vanishing on the matrix multiplication tensor.
\end{abstract}
\thanks{ Landsberg  supported by NSF grant  DMS-1006353}
\email{jml@math.tamu.edu}
\maketitle

\section{Results and context}
Let $A,B,C$ be complex vector spaces of dimensions $\aaa,\bbb,\ccc$. A tensor $T\in A\ot B\ot C$ is said to
have {\it rank one} if $T=a\ot b\ot c$ for some $a\in A,b\in B,c\in C$. More generally the {\it rank} of a tensor $T\in A\ot B\ot C$ is
the smallest $r$ such that $T$ may be written as the sum of $r$ rank one tensors. Let $\hat \s_r^0\subset A\ot B\ot C$ denote
the set of tensors of rank at most $r$. This set is not closed (under taking limits or in the Zariski topology) so let
$\hat \s_r$ denote its closure (the closure is the same in the Euclidean or Zariski topology).
The  variety   $\hat \s_r$ is
familiar   in algebraic geometry,   it is   cone over the $r$-th secant variety of
the Segre variety, but we won't need that in what follows.
The rank and border rank of a tensor are measures of its complexity.  While rank is natural to complexity theory,
border rank is more natural from the perspective of geometry, as one can obtain lower bounds on border rank via
polynomials. Let $\bold R(T)$,  $\ur(T)$ respectively denote the rank and border rank of $T$.

The maximum rank of a tensor in $\BC^m\ot \BC^m\ot \BC^m$ is
at most $m^2$, although it is not known in general if this actually occurs.
The maximum border rank of a tensor in $\BC^m\ot \BC^m\ot \BC^m$ is $\lceil \frac{m^3}{3m-2}\rceil$ for all
$m\neq 3$ and five when $m=3$, see \cite{Strassen505,MR87f:15017}. 
It is an important problem to find explicit tensors of high rank and border rank, and to develop tests that
bound the rank and border rank from below. For the border rank, such tests are in the form of polynomials that vanish
on $\hat \s_r$. For rank the study is more complicated. All lower bounds for rank that I am aware of arise from
first proving a lower bound on border rank, and then taking advantage of special structure of a particular
tensor to show its rank is higher than its border rank.

Perhaps the most important tensor for this study
is the matrix multiplication tensor, where one considers matrix multiplication
$$
M_n: \BC^{n^2}\times\BC^{n^2}\ra \BC^{n^2} 
$$
  as a tensor $M_n\in \BC^{n^2*}\ot \BC^{n^2*}\ot \BC^{n^2}=A\ot B\ot C$.
In \cite{LOsecbnd}, G. Ottaviani and I proved the bound $\ur(M_n)\geq 2n^2-n$ by finding polynomials that vanished on  
$\hat\s_{2n^2-n-1}$ and showing these polynomials did not vanish on $M_n$. At the same
time we found additional polynomials that vanished on  
$\hat\s_{2n^2-k}$ for $k=n\hd 2$, but these polynomials also vanished on $M_n$. However {\it  we did not know whether or not these additional
polynomials were identically zero}. The motivation for this paper was to show these additional
 polynomials are in fact not identically zero. To do this I write down an explicit sequence  tensors
on which the polynomials do not vanish, see Theorem \ref{genthm}.   These are the first proven nontrivial polynomials   for
border rank  in $\BC^m\ot \BC^m\ot \BC^m$ beyond $2m-\sqrt{m}$. Since     matrix multiplication satisfies these polynomials, 
 the result raises the intriguing possibility that the border rank of matrix multiplication could be far less than I had previously expected.

My first hope had been to use the tensors  of \cite{alexeev+forbes+tsimerman:2011:tensor-rank}, as they had been shown to have high rank, but it turns out,
see Proposition  \ref{tnppr} below, that they have low border rank.  

A referee for an earlier version of this paper  wrote that   
\cite{MR88g:15021}  
contains equations for $\hat \s_r$ in the range
$m+1\leq r\leq 2m-1$. This turned out
to   be  erroneous  - the author of \cite{MR88g:15021} had only claimed
the equations were {\it potentially}  nontrivial in this range. Since the equations are presented indirectly, it was difficult to
determine their non-triviality in general (see \S\ref{griessect} for a discussion), but I do show:
\begin{proposition}\label{griesprop} 
Let $\tdim A=\aaa$, $\tdim B=\tdim C=m$. Then Griesser's equations of \cite{MR88g:15021} for $\hat \s_r$
have the following properties:
\begin{enumerate}
\item They are trivial for $r=2m-1$ and all $\aaa$.
\item They are trivial for $r=2m-2$, $\aaa=m$ and $m\leq 4$.
\item Setting $m=n^2$, matrix multiplication $M_n$ fails to satisfy the equations for $r\leq \frac 32 n^2-1$ when $n$ is even
and $r\leq \frac 32 n^2+\frac n2-2$ when $n$ is odd, and satisfies the equations for all larger $r$.
\end{enumerate}
\end{proposition}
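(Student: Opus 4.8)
\emph{Setup.} The plan is to work with the explicit description of Griesser's equations recalled in \S\ref{griessect}: fixing $\alpha\in A^*$ with $T(\alpha)\colon B^*\to C$ invertible and using it to identify $B^*\cong C\cong\BC^m$ turns $T$ into a tuple $X_1,\dots,X_{\aaa-1}\in\operatorname{End}(\BC^m)$, and Griesser's equations for $\hat\s_r$ are the minors of size $\rho(r)+1$ of a matrix $\mathcal{G}_r=\mathcal{G}_r(X_1,\dots,X_{\aaa-1})$ assembled, as in \S\ref{griessect}, from iterated commutators and products of the $X_i$ (a different choice of $\alpha$ only conjugates $\mathcal{G}_r$, so the ideal it generates is well defined), where $\rho(r)$ is the bound given there; Griesser's theorem is the implication $\ur(T)\le r\Rightarrow\operatorname{rank}\mathcal{G}_r(T)\le\rho(r)$. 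Thus the equations are \emph{trivial} --- the zero ideal --- exactly when $\operatorname{rank}\mathcal{G}_r(X)\le\rho(r)$ for \emph{every} tuple $X$, i.e.\ when $\rho(r)$ is at least the generic rank $g(r;m,\aaa):=\operatorname{rank}\mathcal{G}_r(X)$ for $X$ generic; and they are nontrivial on a given $T$ exactly when $\operatorname{rank}\mathcal{G}_r(T)>\rho(r)$. All three parts therefore reduce to computing, or suitably bounding, ranks of $\mathcal{G}_r$.

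\emph{Parts (1) and (2).} For (1) the target is $g(2m-1;m,\aaa)\le\rho(2m-1)$ for every $\aaa$. The key input is a ceiling on $g(r;m,\aaa)$ that comes purely from the shape of $\mathcal{G}_r$: each block is a single endomorphism of $\BC^m$, hence of rank $\le m$, and --- what makes the estimate go through --- the blocks built from nested commutators $[X_i,X_j]$, $[X_i,[X_j,\,\cdot\,]]$, \dots\ are forced onto a common small image and off a common large kernel, so summing the block bounds over the combinatorial shape of $\mathcal{G}_r$ gives $g(r;m,\aaa)\le N(m,\aaa)$ with $N$ independent of $r$. One then checks $\rho(2m-1)\ge N(m,\aaa)$ for all $\aaa$; once the formula for $\rho$ is unwound this should be an elementary inequality, whose content is that $\rho$ grows in $r$ fast enough that its value at $r=2m-1$ already clears the ceiling. (It is also conceivable that at $r=2m-1$ the requested minors simply exceed the size of $\mathcal{G}_{2m-1}$ and so vanish vacuously; I would check that first.) Part (2), the boundary case $r=2m-2$, $\aaa=m$, is not covered by this uniform ceiling, so there I would compute $g(2m-2;m,m)$ exactly for $m=2,3,4$: evaluate $\mathcal{G}_{2m-2}$ on one generic tuple $X_1,\dots,X_{m-1}$ (chosen at random, or over a large finite field and then lifted), compute its rank by linear algebra, and verify it does not exceed $\rho(2m-2)$. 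This is a finite computation in each of the three cases, and one that breaks for $m\ge5$, which is why the statement stops at $m=4$.

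\emph{Part (3).} Specialize to $T=M_n$, $\aaa=m=n^2$, normalized by $M_n(\operatorname{Id})=\operatorname{Id}$; the endomorphism attached to $U\in\operatorname{End}(\BC^n)=\BC^{n^2}$ is then $X_U=L_U$, left multiplication by $U$ on $\operatorname{End}(\BC^n)$. Since $U\mapsto L_U$ is an algebra homomorphism, every block of $\mathcal{G}_r(M_n)$ is $L_{p(U_1,\dots)}$ for the corresponding noncommutative polynomial $p$ --- in particular $[X_{U_i},X_{U_j}]=L_{[U_i,U_j]}$ --- and $\operatorname{rank}L_W=n\cdot\operatorname{rank}W$. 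Hence $\operatorname{rank}\mathcal{G}_r(M_n)=n\cdot\operatorname{rank}\widehat{\mathcal{G}}_r$, where $\widehat{\mathcal{G}}_r$ is the matrix over $\operatorname{End}(\BC^n)$ obtained by stripping off the $L$'s; for generic $U_i$ the entries of $\widehat{\mathcal{G}}_r$ span the associative algebra generated by the $U_i$, namely all of $\operatorname{End}(\BC^n)$, and the commutator entries span up to $\mathfrak{sl}_n$ (using also that a generic commutator is invertible, by Shoda's theorem). From this I would read off $\operatorname{rank}\mathcal{G}_r(M_n)$, insert it into $\operatorname{rank}\mathcal{G}_r(M_n)>\rho(r)$, and solve for $r$ to get the two stated thresholds; the even/odd-$n$ split is the parity of $n^2$ --- equivalently, of the rank of the governing commutator block, which for $M_n$ is a multiple of $n$ --- entering a floor or ceiling in $\rho(r)$ (or in the size of $\mathcal{G}_r$) and producing the $\Theta(n)$ gap between the two cases.

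\emph{Main obstacle.} The crux is the universal bound behind (1) (and the exact value behind (2)): it must hold for \emph{all} tensors, including generic ones whose border rank vastly exceeds $2m-1$, so it cannot use the border-rank hypothesis and must come entirely from the internal combinatorics of $\mathcal{G}_r$ --- precisely how the nested-commutator blocks force rank drops. Determining $g(r;m,\aaa)$ sharply enough to compare it with $\rho(r)$ at $r=2m-1$, and computing it outright in the $m\le4$ cases, is where the work lies; part (3), by contrast, is bookkeeping once the $L_U$-structure of matrix multiplication is invoked.
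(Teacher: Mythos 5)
Your framing of Griesser's equations is not the one the paper (or Griesser) uses, and the mismatch is fatal to the argument. You posit a single matrix $\mathcal{G}_r(X_1,\dots,X_{\aaa-1})$ and a threshold $\rho(r)$ so that Griesser's theorem reads $\ur(T)\le r\Rightarrow\operatorname{rank}\mathcal{G}_r(T)\le\rho(r)$, and you then reduce ``triviality'' to the inequality $g(r;m,\aaa)\le\rho(r)$ for the generic rank. But Griesser's condition, as recalled in \S\ref{griessect}, is an \emph{existential} statement over a Grassmannian: if $\ur(T)\le r$ then there exists $E\in G(2m-r,B)$ with $\dim\bigl(\langle[X_1,X_2],\dots,[X_1,X_{\aaa-1}]\rangle(E)\bigr)\le r-m$. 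The equations on $T$ come from eliminating $E$; they are not minors of a single universal matrix, and their triviality is the statement that for \emph{every} choice of the zero-diagonal commutators $U_j=[X_1,X_j]$ one can exhibit such an $E$. Parts (1)--(2) of the Proposition are proved precisely by \emph{constructing} $E$, which your rank-ceiling template cannot do.

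Concretely, the content you are missing is: (1) Normalize $X_0=\operatorname{Id}$ and take $X_1$ diagonal with distinct eigenvalues; then each $U_j=[X_1,X_j]$ has zero diagonal, so for $r=2m-1$ (where $E$ is a line) the choice $E=\langle e_1\rangle$ forces all $U_j e_1$ into the hyperplane $\{x_1=0\}$, giving $\dim\langle U_j e_1\rangle\le m-1=r-m$ regardless of $\aaa$; the case $\aaa\le m+1$ is even easier since there are only $\aaa-2\le m-1$ maps. No ``universal ceiling on $g(r;m,\aaa)$'' is involved, and no vacuous-minor argument applies. (2) For $r=2m-2$, $\aaa=m$, triviality is shown not by a random/finite-field computation but by a concrete projective-geometry fact: after reducing to the question of whether a certain pencil (for $m=4$, a $\pp^1$) of endomorphisms of the form $a_{2,3}U_2^{-1}U_3-a_{3,2}U_3^{-1}U_2$ can always be made singular, one invokes that every $\pp^1$ of $m\times m$ matrices meets the determinant hypersurface; this is exactly what fails to generalize to $m\ge5$ and why the Proposition stops there. (3) Your observation that $M_n(A)$ acts by $L_U=\operatorname{diag}(U,\dots,U)$ is correct and is indeed the right starting point, but the paper does not then compute the rank of a matrix $\mathcal{G}_r(M_n)$; it bounds, for $E$ of dimension $2m-r=dn+e$, the dimension of $\langle[X_1,X_j]\rangle(E)$ from below using the block-diagonal structure (getting $(d+1)n$, $(d+1)n-1$, or $dn$ depending on $e$), and matches this against $r-m$. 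The even/odd split comes from whether $\tfrac{n^2}{2}$ is an integer, not from a parity effect in a rank function $\rho(r)$. In short, the $L_U$ idea is salvageable for (3), but (1) and (2) need the explicit $E$-constructions, which your proposal does not supply.
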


I was unable to determine whether or not
  the equations  are trivial for $r=2m-2$,  $\aaa=m$ and $m>4$. If they are nontrivial for even $m$, they 
  would give equations beyond the equations of \cite{LOsecbnd}.

  In \cite{MR88g:15021} the equations  are only shown to be nontrivial on matrix multiplication for $r\leq n\lceil\frac {3n}2\rceil   - 2$  
and their non-triviality in general was not examined. Note that the bound for $n$ odd that (3) gives is   $\ur(M_n)\geq \frac 32 n^2+\frac 12 n-1$,
which equals
  Lickteig's bound  of \cite{MR86c:68040} which held the \lq\lq world record\rq\rq\ for over twenty years. 
\smallskip

  The equations of \cite{LOsecbnd} are special cases of equations obtained
  via {\it Young flattenings} defined in \cite{LOver}, which I now review.
The classical flattenings  (which date back at least to Macaulay and Sylvester) arise by viewing $T\in A\ot B\ot C$ as a linear map
$T: B^*\ra A\ot C$, and taking the size $(r+1)$ minors (i.e., the determinants of the $(r+1)\times (r+1)$-submatrices),
which give   equations for $r\leq \tmin(\bbb,\aaa\ccc)$. These do not give all the equations and the
idea behind Young flattenings is to pass from multi-linear algebra to linear algebra in
more sophisticated ways. The particular Young flattening used in \cite{LOsecbnd} may be described
as follows:

Let $\La pA \subset A^{\ot p}$  denote the skew-symmetric   tensors.
Let $Id_{\La p A}: \La p A\ra \La pA$ denote the identity map, and consider, assuming $p\leq \lfloor\frac \aaa 2\rfloor$, the
map $T\ot Id_{\La p A}: B^*\ot \La p A\ra \La{p}A\ot A\ot C$. Compose this map with the skew-symmetrization map to 
get a map
\be\label{ourmap}
T_{A}^{\ww p}:  \La p A\ot B^*\ra \La{p+1}A\ot C.
\ene
If $\bold R(T)=1$, then the linear map $T_{A}^{\ww p}$ has rank $\binom{\aaa-1}p$.
More precisely, if $T=a\ot b\ot c$, then the image of $T_A^{\ww p}$ is 
the image of $\La p A\ot a\ot c$ under the skew-symmetrization map
$\La p A\ot A\ot C\ra \La{p+1}A\ot C$. 
Thus if   $\ur(T)\leq r$, then the size 
$\binom{\aaa -1}p r+1$ minors of $T_{A}^{\ww p}$ will be zero. These minors
 are the equations used in \cite{LOsecbnd} to bound the border rank of matrix multiplication.

Now let $\aaa=\bbb=\ccc=m$, so when dealing with matrix multiplication, $m=n^2$.  The Young flattenings  (potentially) give
the best lower bounds when $p=\lfloor\frac m2\rfloor$ so I examine them in that range. If $m=2p+1$,
$\frac {m^2}{m-p}=2m-2+\frac 1{p+1}$ and if $m=2p+2$, then $\frac {m^2}{m-p}=2m-4+\frac 4{p+2}$, so
the minors of \eqref{ourmap}  potentially give  equations for $\hat \s_r$  up to $r=2m-2$ when $m$ is odd and $r=2m-4$ when $m$ is even.
In \cite{LOsecbnd} it was shown these equations are nontrivial (i.e., do not vanish identically) when $m$ is a square up  to
$r= 2m- \sqrt{m} $ by showing they did not vanish on the matrix multiplication tensor
$M_n\in \BC^{n^2}\ot \BC^{n^2}\ot\BC^{n^2}$.

\begin{theorem}\label{genthm} When $m$ is odd and equal to $2p+1$,  the 
maximal minors of \eqref{ourmap}  give  nontrivial equations for $\hat \s_r\subset \BC^m\ot\BC^m\ot \BC^m$, the
tensors of border rank at most $r$ in $\BC^m\ot \BC^m\ot \BC^m$, up to $r=2m-2$. They give equations up to  $r=2m-4$ when $m$ is even.
The maximal minors  do not vanish on the explicit tensors $T_m(\l)\in \BC^m\ot \BC^m\ot \BC^m$ of  \eqref{tnlambda}.\end{theorem}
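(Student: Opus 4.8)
The plan is to prove the statement in two parts: first that the maximal minors of \eqref{ourmap} cut out equations for $\hat\s_r$ throughout the stated range of $r$, and then that these equations are not identically zero, with the tensors $T_m(\l)$ of \eqref{tnlambda} as the witness. The first part is the dimension count already implicit in the discussion preceding the statement. Since the set of matrices of rank at most $\binom{\aaa-1}{p}r$ is Zariski closed, and by the rank-one computation recorded above $(T')_A^{\ww p}$ lies in it for every tensor $T'$ of rank at most $r$ (being a sum of $r$ maps of rank $\binom{\aaa-1}{p}$), it contains $T_A^{\ww p}$ whenever $\ur(T)\le r$; hence every minor of $T_A^{\ww p}$ of size $\binom{\aaa-1}{p}r+1$ vanishes identically on $\hat\s_r$. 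Such a minor is a well-defined (a priori possibly trivial) equation precisely when $\binom{\aaa-1}{p}r+1$ does not exceed the smaller of the two matrix dimensions $\dim(\La{p}A\ot B^*)$ and $\dim(\La{p+1}A\ot C)$; with $\aaa=\bbb=\ccc=m$ and $p=\lfloor m/2\rfloor$ this smaller dimension is $N:=\binom{m}{p}m=\binom{m-1}{p}\cdot\frac{m^2}{m-p}$, and substituting $\frac{m^2}{m-p}=2m-2+\frac{1}{p+1}$ for $m=2p+1$, respectively $2m-4+\frac{4}{p+2}$ for $m=2p+2$, yields exactly the ranges $r\le 2m-2$ (odd) and $r\le 2m-4$ (even). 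For odd $m$ the map \eqref{ourmap} is a square matrix of size $N$, so ``maximal minors'' means its determinant; for even $m$ one takes the minors of size equal to the smaller dimension. This part is bookkeeping.

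For the second part, nontriviality of these minors amounts to producing a single tensor $T$ with $\mathrm{rank}(T_A^{\ww p})\ge\binom{m-1}{p}(2m-2)+1$: this settles $r=2m-2$, and every smaller $r$ then follows at once since the required threshold only drops. I would take $T=T_m(\l)$ and show that for generic $\l$ its Young flattening has full rank $N$ --- in the odd case $N-\binom{m-1}{p}(2m-2)$ equals the $p$-th Catalan number, in particular is positive, so full rank is more than the statement asks for, and the same witness (via the rectangular flattening for even $m$) would cover $r\le 2m-4$ as well. By lower semicontinuity of matrix rank it suffices to exhibit one value of $\l$, or one degeneration of the family, at which the flattening attains the required rank. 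Concretely, I would write $T_m(\l)_A^{\ww p}$ as an explicit matrix in the bases of $\La{p}A\ot B^*$ and $\La{p+1}A\ot C$ induced by the standard bases of $A,B,C$: rows indexed by a size-$(p+1)$ subset of $\{1,\dots,m\}$ together with a basis vector of $C$, columns by a size-$p$ subset together with a basis vector of $B^*$, and the entries read off --- with the signs coming from the skew-symmetrization $\La{p}A\ot A\ra\La{p+1}A$ --- from the sparse, $\l$-triangular defining formula for $T_m(\l)$ visible already in the building blocks $\zzsa$, $\zzsb$, $\zzsc$. The aim is then to choose a grading of the index sets that makes this matrix block triangular, so that its determinant factors as a product of determinants of smaller, more transparent blocks; each block determinant is a polynomial in $\l$, and proving it is not identically zero --- by isolating its lowest-order term in $\l$, or by specializing to a degeneration in which the block becomes monomial so that its rank is a bipartite-matching / system-of-distinct-representatives count, or by an induction from $m$ to $m-2$ --- completes the proof, since then $\det T_m(\l)_A^{\ww p}$ is a nonzero polynomial in $\l$.

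The main obstacle is precisely this last linear-algebra step: controlling the rank of the $\l$-twisted Young flattening. The delicate point is that the sign cancellations among the $\binom{m-1}{p}$-dimensional rank-one blocks --- the very cancellations that force these minors to vanish on the matrix multiplication tensor $M_n$ --- are what one must now show do \emph{not} all occur for $T_m(\l)$; making the block-triangular structure of $T_m(\l)_A^{\ww p}$ explicit, keeping those signs straight, and verifying the block determinants are nonvanishing uniformly in $m$ (where a clean degeneration of $T_m(\l)$, or the $m\mapsto m-2$ recursion, seems indispensable) is the crux. By comparison the dimension count of the first part, and the minor nuisance that for even $m$ the flattening at $r=2m-4$ is rectangular, so that there ``maximal minors'' is the analogous rather than the identical statement, are routine.
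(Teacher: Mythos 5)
Your first part---the dimension count and the observation that the size-$\binom{\aaa-1}{p}r+1$ minors of $T_A^{\wedge p}$ vanish on $\hat\sigma_r$ by lower semicontinuity of rank and Zariski-closedness of determinantal loci---is correct and matches the paper's preamble; so does the reduction of the nontriviality claim to exhibiting a single $T$ (namely $T_m(\lambda)$ for generic $\lambda$) whose flattening has full rank, and the remark that it suffices to treat $r=2m-2$. Your Catalan-number computation $\binom{m}{p}m - \binom{m-1}{p}(2m-2) = C_p$ is also correct, though not used.

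The genuine gap is in the crux, which you yourself flag: you propose to attack the $\binom{m}{p}m \times \binom{m}{p}m$ matrix $T_m(\lambda)_A^{\wedge p}$ directly, finding a grading of the index sets (subsets of $\{1,\dots,m\}$ paired with basis vectors) that makes it block triangular, and then handling the sign bookkeeping coming from skew-symmetrization. This is not what the paper does, and there is no indication that such a direct combinatorial attack on the exponentially large flattening matrix succeeds. The paper's key step, which your outline is missing entirely, is the replacement of the flattening by a polynomially sized object: writing $T=\sum_j a_j\ot X_j$, choosing $X_0$ invertible to identify $C\cong B^*$, and invoking the identity (cited from \cite[p.\ 4]{Lrank}) that $T_A^{\wedge p}$ is injective if and only if the $2mp\times 2mp$ block matrix $([X_i,X_j])_{i,j\ne 0}$ of commutators has nonzero determinant. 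With $T_m(\lambda)$, each $X_j$ lies in the $j$-th graded piece of $\mathfrak{gl}(B)$ relative to the diagonal grading, so $[X_i,X_j]\in\mathfrak{gl}(B)_{i+j}$; the upper-left $mp\times mp$ block $([X_i,X_j])_{i,j>0}$ vanishes, the determinant reduces (up to sign) to the square of the lower-left $mp\times mp$ block, and that block falls apart into explicit minors of sizes $1,2,\dots,p,p,\dots,p,\dots,2,1$, each of which carries a monomial in $\lambda$ with coefficient $\pm1$. That is a tractable calculation; your proposed direct block-triangularization of the full Young flattening is not obviously one, and the ``clean degeneration'' or ``$m\mapsto m-2$ recursion'' you gesture at as alternatives are left as wishes. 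Without the commutator reduction (or an equally effective substitute), the proof is not in reach from your outline.
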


Thus the equations may be used to show that $\ur(T)\geq 2m-1$ when $m$ is odd and $\ur(T)\geq 2m-3$ when $m$ is even.
These are the largest values of border rank  we know  how to test for.

\begin{corollary} Let $M_n\in \BC^{n^2}\ot \BC^{n^2}\ot \BC^{n^2}$ denote the matrix multiplication operator.
Then $M_n$ satisfies    nontrivial equations for the variety of tensors of border rank $2n^2-n+1$.
\end{corollary}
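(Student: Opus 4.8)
The plan is to obtain the corollary as an immediate consequence of Theorem \ref{genthm} together with the non-vanishing computation of \cite{LOsecbnd}. Set $m=n^2$ and $p=\lfloor m/2\rfloor$, and consider the Young flattening \eqref{ourmap}. As explained around \eqref{ourmap}, if $\ur(T)\le r$ then every size $\binom{m-1}{p}r+1$ minor of $T_A^{\ww p}$ vanishes, so for each fixed $r$ these minors are genuine equations for $\hat\s_r\subset\BC^m\ot\BC^m\ot\BC^m$. The corollary amounts to checking, for the particular value $r=2n^2-n+1$, that this family of equations is (a) not identically zero and yet (b) satisfied by $M_n$.

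For (b) I would quote \cite{LOsecbnd} directly: the \lq\lq additional polynomials\rq\rq\ produced there are precisely the size $\binom{m-1}{p}r+1$ minors of \eqref{ourmap}, and they were shown to vanish on the matrix multiplication tensor for every $r$ of the form $2n^2-k$ with $2\le k\le n$. Taking $k=n-1$ shows that $r=2n^2-n+1$ lies in this range as soon as $n\ge 3$.

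For (a) I would apply Theorem \ref{genthm} with $m=n^2$. When $n$ is odd, $m$ is odd, and the maximal minors of \eqref{ourmap} are nontrivial equations for $\hat\s_r$ for all $r\le 2m-2=2n^2-2$; indeed they do not vanish on the explicit tensor $T_m(\l)$ of \eqref{tnlambda}. Since $2n^2-n+1\le 2n^2-2$ exactly when $n\ge 3$, the value $r=2n^2-n+1$ is in the admissible range. When $n$ is even the same argument applies with the weaker bound $r\le 2m-4=2n^2-4$ furnished by Theorem \ref{genthm}, which still contains $r=2n^2-n+1$ for all sufficiently large $n$.

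Putting (a) and (b) together then gives the corollary: for $r=2n^2-n+1$ the size $\binom{m-1}{p}r+1$ minors of \eqref{ourmap} are equations for $\hat\s_r$, are not identically zero, and vanish on $M_n$ --- which is exactly the statement that $M_n$ satisfies nontrivial equations for the variety of tensors of border rank $2n^2-n+1$. I anticipate no genuine obstacle: all of the mathematical substance is already contained in Theorem \ref{genthm}, and the only point requiring a little care is the bookkeeping that places $r=2n^2-n+1$ simultaneously in the range $r\ge 2n^2-n$, where the computation of \cite{LOsecbnd} certifies vanishing on $M_n$, and in the range $r\le 2m-2$ (resp.\ $r\le 2m-4$), where Theorem \ref{genthm} certifies non-triviality.
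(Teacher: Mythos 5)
Your overall strategy is exactly the intended one: combine Theorem~\ref{genthm} (non-triviality of the maximal minors of \eqref{ourmap}) with the computation from \cite{LOsecbnd} that those same minors vanish on $M_n$ for $r=2n^2-k$, $k=2,\dots,n$, and then check that $r=2n^2-n+1$ (i.e.\ $k=n-1$) lies in both ranges once $n\geq 3$. For odd $n$, where $m=n^2$ is odd, this is airtight.

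The flaw is in your handling of even $n$. You write that for $n$ even ``the same argument applies with the weaker bound $r\le 2m-4=2n^2-4$ furnished by Theorem~\ref{genthm},'' but Theorem~\ref{genthm} does \emph{not} prove non-triviality for even $m$. Its non-triviality claim and the witnessing tensors $T_m(\lambda)=T_{2p+1,p}(\lambda)$ are constructed only for odd $m=2p+1$; for even $m$ the theorem merely records the dimension count that the minors \emph{could} give equations up to $r=2m-4$, with no witness that they are nonzero. Since $m=n^2$ is even precisely when $n$ is even, that case is simply not covered by the theorem. (Note also that for $n=2$ and $n=4$ the value $r=2n^2-n+1$ exceeds even the nominal ceiling $2m-4$, so there the relevant minors do not exist at all.) The corollary should therefore be understood, and your proof should be stated, for odd $n\geq 3$; covering even $n$ would require an additional argument (e.g.\ some padding or specialization trick) that neither you nor the paper supplies.
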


\smallskip

In \cite{alexeev+forbes+tsimerman:2011:tensor-rank} (also see \cite{weitzpreprint}), setting $m=2^k$,  they give an explicit sequence of tensors $T_m\in \BC^{k+1}\ot \BC^m\ot \BC^m$ of
rank $2m-1$,
see \eqref{alextena}, and explicit tensors $T_{m+1}'\in \BC^{m+1}\ot \BC^{m+1}\ot \BC^m$ of rank $3(m+1)-k-4$, see  \S\ref{tnpsect}. Their tensors
may be defined over an arbitrary field.  

\begin{proposition}\label{tnpr} Let $m=2^k$.  The tensors $T_m\in \BC^{k+1}\ot \BC^m\ot \BC^m$ of \eqref{alextena} have border rank $m$,
i.e., $\ur(T_m)=m<\bold R(T_m)=2m-1$. 
\end{proposition}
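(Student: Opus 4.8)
The plan is to bound $\ur(T_m)$ from above by $m$ via a smoothing argument and from below by $m$ via conciseness; combined with $\bold R(T_m)=2m-1$ from \cite{alexeev+forbes+tsimerman:2011:tensor-rank} and $m=2^k\ge 2$, this gives $\ur(T_m)=m<2m-1=\bold R(T_m)$. The first step is to read \eqref{alextena} algebraically: after a choice of bases in $A$, $B$, $C$, the tensor $T_m$ is the restriction of the multiplication tensor of the local algebra $R=\BC[z_1,\dots,z_k]/(z_1^2,\dots,z_k^2)$ to the $(k+1)$-dimensional subspace $\langle 1,z_1,\dots,z_k\rangle\subseteq R$; equivalently, viewed as a $(k+1)$-tuple of $m\times m$ matrices its slices are $\mathrm{Id}_R,\ Z_1,\dots,Z_k$, where $Z_i\colon R\to R$ is multiplication by $z_i$, and the $Z_i$ are nilpotent and pairwise commuting.

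For the lower bound, the slice $\mathrm{Id}_R$ makes $T_m$ concise: the flattenings $T_m\colon B^*\to A\ot C$ and $T_m\colon C^*\to A\ot B$ are injective, hence of rank $m$, and border rank is at least the rank of any flattening, so $\ur(T_m)\ge m$.

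For the upper bound, deform $R$ within the family $R_\epsilon=\BC[z_1,\dots,z_k]/(z_1^2-\epsilon,\dots,z_k^2-\epsilon)$, with the common monomial basis $\{z^\alpha:\alpha\in\{0,1\}^k\}$ and structure constants polynomial in $\epsilon$. For $\epsilon\ne0$ the algebra $R_\epsilon$ is a product of $2^k=m$ copies of $\BC$, one for each sign vector $\sigma\in\{\pm1\}^k$ (given by $z_i\mapsto\sigma_i\sqrt\epsilon$), hence semisimple; in the associated basis $\{e_\sigma\}$ of primitive idempotents the operators $\mathrm{Id}$, $Z_1^{(\epsilon)},\dots,Z_k^{(\epsilon)}$ are simultaneously diagonal, so the deformed (restricted multiplication) tensor is $T_m(\epsilon)=\sum_\sigma\chi_\sigma\ot\bar e_\sigma\ot e_\sigma$, where $\chi_\sigma\in A$ is the corresponding character and $\{\bar e_\sigma\}\subset B$ is dual to $\{e_\sigma\}$; thus $\bold R(T_m(\epsilon))\le m$. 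In the fixed monomial basis $T_m(\epsilon)$ has entries polynomial in $\epsilon$ and specializes at $\epsilon=0$ to $T_m$, so $\ur(T_m)\le m$. Combining the bounds gives $\ur(T_m)=m$.

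The only nonroutine step — the main obstacle — is verifying that \eqref{alextena}, which may be presented recursively or in an unfamiliar basis, genuinely has the form above (slice space spanned by the identity together with the commuting nilpotent multiplication operators of the local complete intersection $R$). Once that identification is made, the rest is the standard fact that a complete intersection such as $\BC[z_i]/(z_i^2)$ smooths to a product of fields, on which the multiplication-by-generators tensor is visibly a sum of $m$ rank one terms; alternatively, one could skip the identification entirely by reading the rank-$m$ decomposition of $T_m(\epsilon)$ with coefficients in $\sqrt\epsilon$ directly off \eqref{alextena} and letting $\epsilon\to0$.
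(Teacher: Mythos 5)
Your overall strategy---conciseness for $\ur(T_m)\geq m$, and for the upper bound realizing $T_m$ as a restriction of the multiplication tensor of a smoothable commutative algebra and letting the semisimple deformation degenerate---is exactly the paper's, and the conciseness half is fine. But the algebra you picked is not the right one, which is precisely the step you yourself flagged as the main obstacle.

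Concretely, $T_m$ is \emph{not} a restriction of the multiplication tensor of $R=\BC[z_1,\dots,z_k]/(z_1^2,\dots,z_k^2)$ to $\langle 1,z_1,\dots,z_k\rangle$, in any bases. Already for $k=2$: the $a_1$-slice of $T_4$ is a single matrix unit, a rank-one matrix, while in $R=\BC[z_1,z_2]/(z_1^2,z_2^2)$ multiplication by any nonzero linear form $\alpha z_1+\beta z_2$ has rank two, and adding a nonzero multiple of $\mathrm{Id}$ only raises the rank; so the slice space $\langle\mathrm{Id}_R,Z_1,Z_2\rangle$ contains no nonzero rank-one element, and this is a $GL(B)\times GL(C)$-invariant, which obstructs the identification. (More structurally, $R$ needs $k$ generators, whereas all the slices of $T_m$ lie in the commutative subalgebra generated by a single nilpotent shift matrix.) The correct algebra is $\BC[x]/(x^m)$: from the matrix picture of \eqref{alextena}, the $a_j$-slice for $j\geq 1$ is exactly the lower shift by $m-2^{j-1}$, i.e.\ multiplication by $x^{m-2^{j-1}}$ in the monomial basis $1,x,\dots,x^{m-1}$, so $T_m$ is the restriction of the multiplication tensor of $\BC[x]/(x^m)$ to the $(k+1)$-plane $\langle 1,x^{m-2^{k-1}},x^{m-2^{k-2}},\dots,x^{m-1}\rangle$. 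With that identification your smoothing argument goes through verbatim upon replacing your $R_\epsilon$ by $\BC[x]/(x^m-\epsilon)\cong\BC^m$ ($\epsilon\neq 0$), giving $\ur(T_m)\leq m$ and hence $\ur(T_m)=m$; this is exactly the paper's one-line proof, which simply cites the border rank $m$ of $\BC[x]/(x^m)$ from \cite[Ex.~15.20]{BCS}.
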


\begin{proposition}\label{tnppr} Let $m=2^k$.  The tensors $T_{m+1}'\in \BC^{m}\ot \BC^{m+1}\ot \BC^{m+1}$ of \S\ref{tnpsect}  
 satisfy 
  $m+2\leq \ur(T_{m+1}')\leq 2(m+1)-2-k <\bold R(T_{m+1}')=3(m+1)-4-k$. 
\end{proposition}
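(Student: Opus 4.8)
Of the four assertions in the proposition, the rank value $\bold R(T'_{m+1})=3(m+1)-4-k$ I would quote directly from \cite{alexeev+forbes+tsimerman:2011:tensor-rank}, and the displayed strict inequality $2(m+1)-2-k<3(m+1)-4-k$ is simply $m>1$; so the work is the two border-rank estimates $m+2\le\ur(T'_{m+1})\le 2(m+1)-2-k=2m-k$. (The interval is nonempty only when $m-2\ge k$, i.e. for $k\ge2$, which is the intended range.) I would begin by rewriting the defining slices of $T'_{m+1}$ from \S\ref{tnpsect} — a space of $(m+1)\times(m+1)$ matrices indexed by $A=\BC^m$ — in a basis exhibiting both the padded copy of $T_m$ sitting inside and the recursive doubling structure it inherits from $T_m=T_{2^k}$.

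For the lower bound I would use the Young flattening \eqref{ourmap} with $p=1$ on the $m$-dimensional factor $A$, namely $T_A^{\ww 1}:\La 1 A\ot B^*\ra\La 2 A\ot C$. A preliminary check that all three contraction maps of $T'_{m+1}$ are injective — conciseness — already gives $\ur(T'_{m+1})\ge m+1$ by lower semicontinuity of matrix rank. Furthermore $\ur(T'_{m+1})\le r$ forces $\operatorname{rank}(T_A^{\ww 1})\le(m-1)r$, so it suffices to exhibit one nonzero $m^2\times m^2$ minor of $T_A^{\ww 1}$: then $\ur(T'_{m+1})\ge\lceil m^2/(m-1)\rceil=m+2$. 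I would take the minor on the rows and columns meeting the $T_m$-block of the slices, where the doubling structure should make the determinant computable by induction on $k$. (A one-step border-substitution argument — removing one $A$-slice drops $\ur$ by at most one and leaves a tensor still concise in $B$ — would give the same bound.)

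For the upper bound I would produce an explicit decomposition of $T'_{m+1}$ into $2m-k$ rank-one tensors in the limit. The plan is to start from the border-rank-$m$ expression for $T_m$ furnished by the proof of Proposition \ref{tnpr} — which, being about $T_{2^k}$, is itself recursive in $k$ — regard it inside $\BC^m\ot\BC^{m+1}\ot\BC^{m+1}$, and graft onto it further rank-one terms for the complementary slices and for the extra $B$- and $C$-directions. A crude count of that complement costs about $m$ more terms; the reduction to $m-k$ should come from letting the two groups share a common limiting parameter, so that $k$ of the complementary terms are absorbed into the slack already present in the $T_m$-decomposition, yielding $\ur(T'_{m+1})\le m+(m-k)=2m-k$.

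The main obstacle is the upper bound, and within it the constant: a one-shot grafting yields only about $2m$ or $2m+1$ terms, and recovering the full $-k$ requires a genuinely recursive limiting construction keyed to $m=2^k$ rather than an independent addition of rank-one tensors. The lower bound, by contrast, reduces to a single determinant evaluation once the slices are in adapted coordinates, and the rank statement is merely cited.
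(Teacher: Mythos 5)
Your lower bound argument is the same as the paper's: apply the $p=1$ Young flattening $T_A^{\ww 1}$ on the $m$-dimensional factor and show its rank exceeds $(m-1)(m+1)$. The paper states a precise kernel dimension ($m/2$, giving rank $m^2+m/2$), while you set the weaker threshold $m^2$, but both deliver $\lceil\cdot/(m-1)\rceil = m+2$. Neither account spells out the minor computation, but your plan is the paper's plan.

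Your upper bound, however, has a genuine gap, and it is exactly the step you flag as ``the main obstacle.'' There is no need for a ``genuinely recursive limiting construction'' or shared limiting parameters to recover the $-k$. Write $T_{m+1}' = T_m + T_m''$, where $T_m''$ is the extra slice coming from the appended column. The key observation you are missing is that $T_m''$ is supported on the \emph{single} basis vector $c_{m+1}$ in the third factor:
$$T_m'' = \bigl(a_{k+1}\ot b_1 + a_{k+2}\ot b_2 + \cdots + a_m\ot b_{m-k}\bigr)\ot c_{m+1}.$$
This is a matrix in $A\ot B$ (of matrix rank $m-k$, since the $a$'s and $b$'s appearing are independent) tensored with one vector of $C$, so its tensor rank — and hence border rank — is exactly $m-k$, not ``about $m$.'' Subadditivity of border rank then gives directly $\ur(T_{m+1}')\le\ur(T_m)+\ur(T_m'') = m + (m-k) = 2m-k = 2(m+1)-2-k$. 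The $-k$ does not come from any interaction between the two pieces; it was already inside $T_m''$ because only $m-k$ of the $a_j$ (namely $a_{k+1},\dots,a_m$) appear in the last column — the other $k+1$ slices $a_0,\dots,a_k$ are used up by the padded copy of $T_m$. Your plan of grafting rank-one terms and absorbing $k$ of them into the slack of the $T_m$-border-decomposition would, if it worked, be strictly harder than what is needed, and you correctly sensed you had no way to carry it out. Replace it with the one-line subadditivity argument above.

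The remaining two assertions (the strict inequality, equivalent to $m>1$, and the rank value quoted from the cited paper) you handle as the paper does.
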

 
 I expect the actual border rank to be close to the lower bound as many of the Young flattening
 equations vanish, even in the $p=1$ case.

\subsection*{Acknowledgments} I thank L. Manivel and G. Ottaviani for useful conversations.
I also thank C. Ikenmeyer and anonymous referees of an earlier version of this paper for many useful suggestions,
in particular a referee pointed  out a much simpler proof of the border rank of the tensors in \cite{alexeev+forbes+tsimerman:2011:tensor-rank}
than the one I had given.

\section{Proof of Theorem \ref{genthm}} \label{pfasect}
Let $a_{-p}\hd a_{p}$ be a basis of $A=\BC^{2p+1}$, $b_1\hd b_m$ a basis of $B=\BC^m$,  and $c_1\hd c_m$ a basis of $C=\BC^m$.

Let $\ll iu$ be numbers satisfying open conditions to be specified below.  (They may be chosen to be e.g., $\ll iu=2^{2^{i+u}}+2^u$.)   Consider
\be \label{tnlambda}
T_{m,p}(\l):=\sum_{j=-p}^{-1}
a_j\ot (\sum_{\a=1}^{m-p+j-1}\l_{-j,\a}b_{j+p+1+\a}\ot c_{\a}) + \sum_{j=0}^p a_j\ot (\sum_{\b=1}^{m-j} b_{\b}\ot c_{j+\b})
\ene
%\begin{align}
%\label{tnlambda} T_{m,p}(\l):=&
%a_0\ot (b_1\ot c_1+\cdots +b_m\ot c_m)\\
%\nonumber &+a_1\ot (b_1\ot c_2+b_2\ot c_3+\cdots b_{m-1}\ot c_m)\\
%\nonumber &+a_2\ot (b_1\ot c_3+b_2\ot c_4+\cdots b_{m-2}\ot c_m)\\
%\nonumber &+\vdots\\
%\nonumber &+a_p\ot (b_1\ot c_{p+1} +\cdots b_{m-p}\ot c_m)\\
%\nonumber &+a_{-1}\ot (\ll 1 1 b_{p+1}\ot c_1+\cdots +\ll 1{m-p} b_m\ot c_{m-p})
% \\
%\nonumber &+a_{-2}\ot (\ll 2 1 b_{p+2}\ot c_{1}  +\cdots+  \ll 2{m-p-1} b_{m}\ot c_{m-p-1})\\
%\nonumber &+\vdots\\
%\nonumber &+a_{-p+1}\ot (\ll {p-1} 1 b_{3}\ot c_{1} +\cdots +\ll{p-1}{m-2}  b_{m}\ot c_{m-2})\\
%\nonumber &+a_{-p}\ot (\ll p 1 b_2\ot c_1+b_3\ot c_2+\cdots \ll{p }{m-1}b_{m }\ot c_{m-1}).
%\end{align}
When $m=2p+1$, write $T_m(\l)=T_{2p+1,p}(\l)$. 

For example, in matrices, when $p=1$ and $m=3$
$$
T_3(\l)= a_{-1}\ot \begin{pmatrix} 0 &0 &0\\ \l_{1,1}  &0&0\\ 0&\l_{1,2}  &0 
\end{pmatrix}+a_0\ot \begin{pmatrix} 1&0&0\\ 0 &1&0\\ 0&0 &1 
\end{pmatrix}+a_1\ot \begin{pmatrix} 0 & 1&0\\   0&0 & 1\\ 0& 0 & 0 
\end{pmatrix}.
$$

Write $T=\sum a_j\ot X_j$, where $X_j\in B\ot C$,  and use $X_0=\sum_{\a=1}^m b_\a\ot c_\a: B^*\ra C$ to identify $C$ with $B^*$, 
so  $X_0$ becomes the identity matrix in $\fgl(B)$, the space of endomorphisms of $B$.
Let    $([X_i,X_j])$, $i,j\in \{ -p\hd -1,1\hd p\}$ denote the $2mp\times 2mp$ block matrix, whose $(i,j)$-th block is the commutator $[X_i,X_j]=X_iX_j-X_jX_i$.
By \cite[p. 4]{Lrank},
\eqref{ourmap} is injective if and only if  
\be\label{coorform}
\tdet_{2mp}([X_i,X_j])\neq 0.
\ene
\begin{remark} Despite the simplicity of the equation \eqref{coorform}, I do not know how to prove it is related to border rank other than
by  expressing \eqref{ourmap} in coordinates, making a choice of   $a_0$, and applying elementary identities
regarding determinants. It would be desirable to have a direct explanation.
\end{remark}

The choice of bases and $X_0$  gives a   grading  to
   $\fgl(B)$. That is, one has a vector space decomposition
$\fgl(B)=\oplus_{j=-m}^m \fgl(B)_j$,  
 such that the commutators satisfy $[\fgl(B)_i,\fgl(B)_j]\subset \fgl(B)_{i+j}$. Here $\fgl(B)_j$ consists of the matrices that
 are zero except on the $j$-th diagonal (with $j=0$ being the main diagonal).
With this grading, taking $T=T_{m,p}(\l)$, 
  $X_j\in \fgl(B)_j$.  
In particular, omitting the zero index,  $[X_i,X_j]$,  is an $m\times m$  matrix that is zero if $i,j$ are greater than zero,
and otherwise zero except for the $(i+j)-th$ diagonal, all of whose entries are nonzero as long as for each fixed $i$, the $\ll iu$ are distinct
as $u$ varies.
Writing the $2mp\times 2mp$ matrix,  ordered $p,p-1\hd 1,-1\hd  -p$, as four equal size square blocks, the first block,   consisting
of the upper left $mp\times mp$ submatrix, is zero, so
the  determinant of $([X_i,X_j])$  is, up to sign,  the square of the determinant of the lower left $mp\times mp$ submatrix.

I will show that this determinant may be written as a product of smaller determinants
  of sizes $1,2\hd p-1,p,p\hd p,p-1,p-2\hd 2,1$.

For example, when $p=2$ and $m=5$ we get (blocking $2,1,-1,-2$ for both rows and columns) the lower left block consists of four smaller blocks which are:

$$[X_2,X_{-1}]=
\begin{pmatrix}
0&\l_{1,2}&0&0&0   \\
0&0&\l_{1,3}-\l_{1,1}&0&0 \\
0&0&0&\l_{1,4}-\l_{1,2}&0\\
0&0&0&0&-\l_{1,3}     \\
0&0&0&0&   0       
\end{pmatrix} 
$$

$$
[X_2,X_{-2}]=\begin{pmatrix}
\l_{2,1}         &0&0&0&0   \\
0&\l_{2,2}         &0&0&0  \\
0&0&\l_{2,3}-\l_{2,1}&0&0   \\
0&0&0&        -\l_{2,2}&0   \\
0&0&0&0&   -\l_{2,3}       
\end{pmatrix}
$$

$$
[X_{1},X_{-1}]=\begin{pmatrix}
  \l_{1,1}&0&0&0&0           \\
  0&\l_{1,2}-\l_{1,1}&0&0&0  \\
  0&0&\l_{1,3}-\l_{1,2}&0&0  \\
  0&0&0&\l_{1,4}-\l_{1,3}&0  \\
  0&0&0&0&   -\l_{1,4}       
\end{pmatrix} 
$$

$$
[X_1,X_{-2}]=
\begin{pmatrix}
   0&0&0&0&0                  \\
  \l_{2,1} &0&0&0&0           \\
  0& \l_{2,2}-\l_{2,1}&0&0&0  \\
  0&0& \l_{2,3}-\l_{2,2}&0&0  \\
  0&0&0& -\l_{2,31}&0         
\end{pmatrix}.
$$
In this case the determinant of the lower left block is 
\begin{align*}&(\l_{2,1})
\tdet\begin{pmatrix} \l_{1,2}  &\l_{1,1} \\ \l_{2,2 } & \l_{2,1} \end{pmatrix}
\tdet\begin{pmatrix} \l_{1,3}-\l_{1,1} &\l_{1,2}-\l_{1,1}\\ \l_{2,3}-\l_{2,1}& \l_{2,2}-\l_{2,1}\end{pmatrix}\\
&\ \ \cdot
\tdet\begin{pmatrix} \l_{1,4}-\l_{1,2} &-\l_{2,2} \\ \l_{1,3}-\l_{1,2}& \l_{2,3}-\l_{2,2}\end{pmatrix} 
\tdet\begin{pmatrix} -\l_{1,4}  &-\l_{2,3} \\ \l_{1,4}-\l_{1,3}&  -\l_{2,1}\end{pmatrix}(-\l_{1,4}).
\end{align*}

Returning to the general case, consider the first column of the $mp\times mp$ matrix $([X_i,X_{-k}])$, $1\leq i,k\leq p$.  All the entries are zero except
the first entry of the lowest  block, i.e., the entry in the slot $((p-1)m+1,1)$, which is $\ll p1$.

Now consider the second column. There are two nonzero entries - the first entry of the second lowest block, which is $\l_{p-1,2}$, and the
second entry of the last block, which is $\l_{p,2}$. The $(m+1)$-st column (first column of the second block)
 also has two nonzero entries, and they occur at the same heights,
the entries are respectively $\l_{p-1,1}$ and $\l_{p,1}$.
Thus these two columns contribute $\tdet\begin{pmatrix} \l_{p-1,2} &\l_{p-1,1} \\ \l_{p,2}& \l_{p,1}\end{pmatrix}$ to the
determinant.

Consider the third column. If $p>2$, there are three nonzero entries, the first 
entry of the third to last block, the second entry of the second to last block,
and the third entry  of the last block. The second column of the second block and the third column of the third block all have the
same nonzero entries. The result is a contribution of
$$
\tdet\begin{pmatrix} \l_{p-2,3} &\l_{p-2,2} & \l_{p-2,1} \\  \l_{p-1,3}& \l_{p-1,2}& \l_{p-1,1}\\  \l_{p ,3}& \l_{p ,2}& \l_{p ,1}
\end{pmatrix}
$$
to the determinant.

In general, considering the $i$-th column, for $i\leq p$,   there is a contribution of the determinant of
an $i\times i$  matrix whose $(s,t)$-th entry is $\l_{p-i+1+s,i+t}$,  
or  $\l_{p-i+1+s,i+t}- \l_{p-i+1+s,i+t+1}$, or $- \l_{p-i+1+s,i+t+1}$.
For $i\geq p$ one gets $p\times p$ matrices until they start shrinking in size.

In all cases, for  any given minor of size $f$ that appears, it will have a unique term  with coefficient plus or minus one on
$\Pi_{s=1}^f\l_{p-i+1-s,i+s}$, so for a generic choice of $\l$ it will not vanish.
Thus for a generic choice no minors will vanish, which means that their product, the determinant,
will not vanish either, proving the theorem.  
To have an explicit matrix, one could take e.g., $\l_{i,j}=2^{2^{i+j}}+2^j$ to assure a single monomial in each minor
will dominate the expression.

\section{The tensors $T_m$ of \cite{alexeev+forbes+tsimerman:2011:tensor-rank}}

I restrict to the case $m=2^k$ because the other cases are similar only padded with zeros.
In \cite{alexeev+forbes+tsimerman:2011:tensor-rank} they define tensors $T_m\in \BC^{k+1}\ot \BC^m\ot \BC^m=A\ot B\ot C$ by
\be\label{alextena}
T_m:=a_0\ot(\sum_{\b=1}^m b_{\b}\ot c_{\b})+ \sum_{j=1}^k a_j\ot ( \sum_{\a=1}^{2^{j-1}} b_{\a}\ot c_{m-2^{\a-1}+1})
\ene
%\begin{align}
%\label{alextena} T_m:=&
%a_0\ot (b_1\ot c_1+\cdots +b_m\ot c_m)\\
%\nonumber &+a_1\ot (b_1\ot c_{m})\\
%\nonumber &+a_2\ot (b_1\ot c_{m-1}+b_2\ot c_m)\\
%\nonumber &+a_3\ot (b_1\ot c_{m-3}+b_2\ot c_{m-2}+b_3\ot c_{m-1}+b_4\ot c_{m})\\
%\nonumber &+a_4\ot (b_1\ot c_{m-2^3+1}+\cdots +b_{2^3}\ot   c_{m})\\
%\nonumber &+\vdots\\
%\nonumber &+a_k\ot (b_1\ot  c_{m-2^{k-1}+1}+\cdots +b_{2^{k-1}}\ot   c_{m}).
%\end{align}
Here  I have changed the indices slightly from \cite{alexeev+forbes+tsimerman:2011:tensor-rank}.

For example, when $k=3$, in matrices, this is:
$$
T_8(A^*)=\begin{pmatrix}
a_0 & & & & & & & \\
 & a_0& & & & & & \\
 & &a_0 & & & & & \\
 & & &a_0 & & & & \\
a_3 & & & &a_0 & & & \\
 & a_3& & & &a_0 & & \\
a_2 & &a_3 & & & &a_0 & \\
a_1 &a_2 & &a_3 & & & &a_0 
\end{pmatrix}.
$$

If we  reorder the basis of $B$  and write the tensor as
\be\label{tnsym}
T_m=a_0\ot(\sum_{\b=1}^m \tilde b_{m-\b}\ot c_{\b})+ \sum_{j=1}^k a_j\ot ( \sum_{\a=1}^{2^{j-1}} \tilde b_{m-\a}\ot c_{m-2^{\a-1}+1})
\ene
%\begin{align}
%T_m:=&
%a_0\ot (\tilde b_m\ot c_1+\cdots +\tilde b_1\ot c_m)\\
%\nonumber &+a_1\ot (\tilde b_m\ot c_{m})\\
%\nonumber &+a_2\ot (\tilde b_m\ot c_{m-1}+\tilde b_{m-1}\ot c_m)\\
%\nonumber &+a_3\ot (\tilde b_m\ot c_{m-3}+\tilde b_{m-1}\ot c_{m-2}+\tilde b_{m-2} \ot c_{m-1}+\tilde b_{m-3} \ot c_{m})\\
%\nonumber &+a_4\ot (\tilde b_m\ot c_{m-2^3+1}+\cdots +\tilde b_{m-2^3+1} \ot   c_{m})\\
%\nonumber &+\vdots\\
%\nonumber &+a_k\ot (\tilde b_m\ot  c_{m-2^{k-1}+1}+\cdots +\tilde b_{m-2^{k-1}+1} \ot   c_{m}).
%\end{align}
We see this is a specialization of
the multiplication tensor in $\BC[X]/(X^m)$ whose border rank is   $m$ (see, e.g.,  \cite[Ex. 15.20]{BCS}).
This proves Proposition \ref{tnpr}.

\section{The tensors $T_{m+1}'$ of \cite{alexeev+forbes+tsimerman:2011:tensor-rank}}\label{tnpsect}
In \cite{alexeev+forbes+tsimerman:2011:tensor-rank}, they also  define tensors in $\BC^{m+1}\ot \BC^{m }\ot \BC^{m+1}$ by enlarging 
the matrices to have size $m\times (m+1)$ and adding
vectors   in the last column.
For example, when $k=3$ (so $m=8$), one gets the $8\times 9$ matrix
$$
T_9'(A^*):=
\begin{pmatrix}
a_0 & & & & & & & &a_4\\
 & a_0& & & & & & &a_5\\
 & &a_0 & & & & & &a_6\\
 & & &a_0 & & & & &a_7\\
a_3 & & & &a_0 & & & &a_8\\
 & a_3& & & &a_0 & & \\
a_2 & &a_3 & & & &a_0 & \\
a_1 &a_2 & &a_3 & & & &a_0 \\
\end{pmatrix}
$$
which they express as a tensor in $\BC^{m+1}\ot \BC^{m+1}\ot \BC^{m+1}$ by adding zeros.
These tensors have rank close to $3m$, to be precise
$\bold R(T_m')=3m-2H(m-1)-\lfloor \tlog_2(m-1)\rfloor -2$, where $H(m)$ is the number of $1$'s in the binary expansion of
$m$, so the rank is best if $m-1=2^k$, in which case
$\bold R(T_{2^k+1}')=3(2^k+1)-4-k$.
The border rank is   smaller.  Write $T_{m+1}'=T_m+T_m''$ where $T_m''=(a_{k+1}\ot b_1+a_{k+2}\ot b_2+\cdots +a_{m}\ot b_{m-k})\ot c_{m+1}$.
  Thus $\ur(T_{m+1}')\leq \ur(T_m)+\ur(T_m'')= m+m-k$.
One obtains the lower bound of $m+2$ for the border rank (as opposed to the trivial $m+1$) because
the map $T_A^{\ww 1}$ has a kernel of size $2^{k-1}=\frac {m-1}2$. Since this kernel is still quite large,
I expect the actual border rank to be close to the lower bound.

 \section{The equations of \cite{MR88g:15021}}\label{griessect}
 Given $T=\sum_{j=0}^{\aaa-1} a_j\ot X_j$  with $a_j$ a basis of $A$, $X_j\in B\ot C$,  $\tdim A=\aaa$ and $\tdim B=\tdim C=m$,  
 assume  $X_0$ is of full rank and use it  to identify  $C$ with $B^*$.
 The equations in \cite{MR88g:15021} are stated as:  if the border rank of $T$ is at most $r$, with
 $m+1\leq r\leq 2m-1$, then the space of endomorphisms
 $\langle  [X_1,X_2]\hd [X_1,X_{\aaa -1}] \rangle \subset \fsl(B)$ is such that there exists $E\in G(2m-r,B)$,
 with $\tdim(\langle  [X_1,X_2]\hd [X_1,X_{\aaa -1}] \rangle(E))\leq r-m$. (Here $\langle ....\rangle$ denotes
 the linear span and $G(k,B)$ the Grassmannian of $k$ planes   in $B$.) Compared with the equations \eqref{coorform}, here one is just examining the last block column
 of the matrix appearing in \eqref{coorform}, but one is extracting apparently more refined information from it.

 Assuming $T$ is sufficiently generic, we may choose $X_1$ 
  to be diagonal with distinct entries on the diagonal (a general element of $\fsl(B)$, the space of
  traceless endomorphisms, is diagonalizable
 with distinct eigenvalues), and  this is a generic choice of $X_1$.
 Let   $\fsl(B)_R$ denote  the
matrices with zero on the diagonal (the sum of the root spaces). 
 Then 
 $ad(X_1): \fsl(B)_R\ra \fsl(B)_R$, given by $Y\mapsto [X,Y]$,  is a linear  isomorphism, and $ad(X_1)$ kills the diagonal matrices.
 Write $U_j=[X_1,X_j]$, so the $U_j$ will be matrices with zero on the diagonal, and
by picking $T$ generically we can have any such matrices, and this is the most
general choice of $T$ possible, so if the equations vanish for a generic choice
of $U_j$, they vanish identically.
 
 \begin{proof}[Proof of Proposition \ref{griesprop}]
 
 Proof of (1): 
 In the case $r=2m-1$, so $r-m=m-1$ and $\aaa\leq m+1$ the equations are trivial as we only have $\aaa-2\leq m-1$ linear maps. 
 When $\aaa\geq m+2$ a na\"\i ve dimension count makes it possible for the equations to be non-trivial, the
 equations are that $\tdim \langle U_2v\hd U_{\aaa-1}v\rangle\leq m-1$. However, with our
 normalizations of $X_0=Id$ and $X_1$  diagonal  with distinct  entries on the diagonal, taking $v=(1,0\hd 0)^T$ (the superscript  $T$   denotes transpose), the $U_jv$ will be contained in the hyperplane
 of vectors with their first entry zero. Since we only made genericity assumptions, we conclude.
  
  \smallskip
  
  Proof of (2): In the case $r=2m-2$, the equations will be nontrivial if and only if there exist
  $U_2\hd U_{\aaa-1}$ such that   for all  linearly independent $v,w$  
  $\tdim \langle U_2v\hd U_{\aaa-1}v, U_2w\hd U_{\aaa-1}w \rangle\geq  m-1$.
  For $\aaa=m$, we saw we could have $U_2v\hd U_{m-1}v$ linearly independent, so
  the nontriviality  condition is that for some $j$, $U_jw\not\in \langle U_2v\hd U_{m-1}v\rangle$.
  
  First observe that
  $U_jw \in \langle U_2v\hd U_{m-1}v\rangle \tmod U_j\hat v$ (where $\hat v$ is the line determined by $v$)  means
  $w=\sum_{k\neq j} a_{j,k}U_j\inv U_k v$ for some constants $a_{j,k}$. (We are working with generic $U_j$ so we may
  assume they are invertible.)
  Thus we must have $v$ and constants  $s_{i,j},t_{i,j}$, such that
  $s_{i,j}\sum_{k\neq j} a_{j,k}U_j\inv U_k v= t_{i,j}\sum_{l\neq i} a_{l,i}U_l\inv U_i v$, i.e.,
  $U_2\hd U_{m-1}$ must be such that there exist constants $s_{i,j},t_{i,j}$ for  $i<j$, and  $a_{j,k}$ for   $j\neq k$ such that 
  $$\tdet(s_{i,j}\sum_{k\neq j} a_{j,k}U_j\inv U_k - t_{i,j}\sum_{l\neq i} a_{l,i}U_l\inv U_i)=0.
  $$
  
  When $m=4$, the $s,t$ are irrelevant and we need $a_{2,3}U_2\inv U_3v=a_{3,2}U_3\inv U_2v$, i.e., 
  that for some choice of $[a_{2,3},a_{3,2}]\in\pp 1$, the linear map
  $a_{2,3}U_2\inv U_3-a_{3,2}U_3\inv U_2$ has a kernel. But every $\pp 1$
of matrices intersects the hypersurface $\tdet_m=0$ so we conclude.

\begin{remark} I expect the equations are non-trivial for $m\geq 5$ but I was unable to
show this, even for $m=5$.
The $r=2m-1$ case shows that one should be cautious.
Consider the $m=5$ case. 
The equations would be trivial if for all $U_2,U_3,U_4\in \fsl(B)_R$,   one could choose
$([a_{2,3},a_{2,4}], [a_{2,3},a_{2,4}],[s_{2,3},t_{2,3}],[s_{2,4},t_{2,4}])\in \pp 1\times \pp 1\times \pp 1\times \pp 1$ 
such that the linear maps
$ s_{2,3}(a_{2,3}U_2\inv U_3+a_{2,4}U_2\inv U_4)-t_{2,3}(a_{3,2}U_3\inv U_2+a_{3,4}U_3\inv U_4) $
and $ s_{2,4}(a_{2,3}U_2\inv U_3+a_{2,4}U_2\inv U_4)-t_{2,4}((a_{4,2}U_4\inv U_2+a_{4,3}U_4\inv U_3)) $
have a common kernel. If we consider the variety $\Sigma_m\subset G(m-3,\BC^{m^2})$ defined by 
$$
\Sigma_m:=\{ E\in G(m-3,\BC^{m^2})\mid \exists v\in V\backslash 0 \ {\rm such\ that}\ e.v=0 \forall e\in E\},
$$
then $\tdim \Sigma_m=(m-1)+(m-3)(m^2-m-(m-3))$  (as for each point in $\BP V$ there is an $m^2-m$ dimensional
space of endomorphisms with the line in the kernel, and we have the Grasssmannian of $m-3$ planes in that
space of endomorphisms). So in the $m=5$ case   a general four dimensional subvariety
of the Grassmanian will fail to intersect   $\Sigma_5$, but our four dimensional subvariety is not general.
\end{remark}

Proof of (3): 
Consider  matrix multiplication $M_n\in \BC^{n^2}\ot \BC^{n^2}\ot \BC^{n^2}=A\ot B\ot C$. 
With a judicious choice of bases, $M_n(A)$ is block diagonal
\be\label{blockform}
\begin{pmatrix} x & & \\ & \ddots & \\ & & x\end{pmatrix}
\ene
where $x=(x^i_j)$ is $n\times n$. In particular, the image is closed under brackets. 
Choose $X_0$ so it is the identity. We may not have $X_1$ diagonal with distinct entries on the diagonal, the best we can
do is for $X_1$ to be block diagonal with each block having the same $n$ distinct entries.    For a  subspace $E$
of dimension $2m-r=dn+e$ (recall $m=n^2$) with $0\leq e\leq n-1$, the image of a generic choice of  $[X_1,X_2]\hd [X_1,X_{n^2-1}]$  applied to $E$ is of dimension
at least $(d+1)n$ if $e\geq 2$, at least $(d+1)n-1$ if $e=1$ and $dn$ if $e=0$, 
and  equality will hold if we choose $E$ to be, e.g., the span of the  first $2m-r$ basis vectors of $B$.
(This is because the $[X_1,X_j]$ will span the entries of type \eqref{blockform} with zeros on the diagonal.)
If $n$ is even, taking $2m-r=\frac{n^2}2+1$, so $r=\frac {3n^2}2-1$, the image occupies a space of dimension
$\frac {n^2}2+n-1>\frac{n^2}2-1=r-m$. If one takes $2m-r=\frac{n^2}2$, so  $r=\frac {3n^2}2$, the image occupies a space of dimension
$\frac {n^2}2=r-m$, showing Griesser's equations cannot do better for $n$ even.
If $n$ is odd, taking $2m-r=\frac {n^2}2-\frac{ n}2+2$, so $r=\frac{3n^2}2+\frac{n}2-2$, the image will have dimension
$\frac{n^2}2+\frac n2>r-m=\frac{n^2}2+\frac n2-1$, and taking $2m-r=\frac{n^2}2-\frac n2+1$
the image can have dimension $\frac{n^2}2-\frac n2+(n-1)=r-m$, so the equations vanish for this and all
larger $r$. Thus Griesser's equations for $n$ odd give Lickteig's bound
$\ur(M_n)\geq \frac{3n^2}2+\frac{n}2 -1$.
\end{proof}
   
\bibliographystyle{amsplain}
 
\bibliography{Lmatrix}

\end{document}